\theoremstyle{plain}
\newtheorem{theorem}{Theorem}
\newtheorem{lemma}[theorem]{Lemma}
\theoremstyle{definition}
\newtheorem{definition}[theorem]{Definition}
\newtheorem{example}[theorem]{Example}
\theoremstyle{remark}
\newtheorem{remark}[theorem]{Remark}
\let\Pr\undefined\DeclareMathOperator{\Pr}{Pr}
\DeclareMathOperator*{\col}{col}
\begin{document}

\begin{frontmatter}

\title{Second-Order Moment-Closure for Tighter Epidemic Thresholds}

\author[naist]{Masaki Ogura\corref{cor}}
\ead{oguram@is.naist.jp}
\fntext[naist]{Graduate School of Information Science, Nara Institute of Science and Technology, Ikoma, Nara 630-0192, Japan}
\cortext[cor]{Corresponding author}

\author[penn]{Victor M. Preciado}
\ead{preciado@seas.upenn.edu}
\fntext[penn]{Department of Electrical and Systems Engineering, University of Pennsylvania, Philadelphia, PA 19104-6314, USA}

\begin{abstract}                          
In this paper, we study the dynamics of contagious spreading processes taking
place in complex contact networks. We specifically present a lower-bound on the
decay rate of the number of nodes infected by a susceptible-infected-susceptible
(SIS) stochastic spreading process. A precise quantification of this decay rate
is crucial for designing efficient strategies to contain epidemic outbreaks.
However, existing lower-bounds on the decay rate based on first-order mean-field
approximations are often accompanied by a large error resulting in inefficient
containment strategies. To overcome this deficiency, we derive a lower-bound
based on a second-order moment-closure of the stochastic SIS processes. The
proposed second-order bound is theoretically guaranteed to be tighter than
existing first-order bounds. We also present various numerical simulations to
illustrate how our lower-bound drastically improves the performance of existing
first-order lower-bounds in practical scenarios, resulting in more efficient
strategies for epidemic containment.
\end{abstract}

\begin{keyword}                           
Complex networks, spreading processes, stochastic processes, stability.
\end{keyword}                             

\end{frontmatter}

\section{Introduction}

Understanding the dynamics of spreading processes taking place in complex
networks is one of the central questions in the field of network science, with
applications in information propagation in social networks~\cite{Lerman2010},
epidemiology~\cite{Nowzari2015a}, and cyber-security~\cite{Roy2012}. Among
various quantities characterizing the asymptotic behaviors of spreading
processes, the \emph{decay rate} (see, e.g.,
\cite{VanMieghem2009a,Chakrabarti2008}) of the spreading size (i.e., the number
of nodes affected by the spread) is of fundamental importance. Besides
quantifying the impact of contagious spreading processes over
networks~\cite{Lajmanovich1976,Ganesh2005}, the decay rate has been used to
measure the performance of containment strategies to control epidemic
outbreaks~\cite{Wan2008IET}. In this direction, the authors
in~\cite{Preciado2014} presented an optimization-based approach for distributing
a limited amount of resources to efficiently contain spreading processes by
maximizing their decay rate towards the disease-free equilibrium. This framework
was later extended to the cases where the underlying network in which the
spreading process is taking place is uncertain~\cite{Han2015a}, temporal
\cite{Ogura2015a,Nowzari2015}, and adaptively changing
\cite{Ogura2015i,Ogura2016l}. Recently, the authors in~\cite{AbadTorres2016}
presented an approach for achieving an optimal resource allocation in order to
maximize the decay rate under sparsity constraints.

However, finding the decay rate of a spreading process is, in general, a
computationally hard problem. Even for the case of the
susceptible-infected-susceptible (SIS) model~\cite{Nowzari2015a}, which is one
of the simplest models of spread, the exact decay rate is given in terms of the
eigenvalues of a matrix whose size grows \emph{exponentially} fast with respect
to the number of nodes in the networks~\cite{VanMieghem2009a}. In order to avoid
this computational difficulty, it is common in the
literature~\cite{Preciado2014,Han2015a,AbadTorres2016} to use a lower-bound on
the decay rate based on first-order mean-field approximations of the spreading
processes. However, this first-order approximation is not necessarily accurate;
in other words, its approximation error can be significantly large for several
important social and biological networks, as we will demonstrate later in this
paper. Therefore, the design of strategies for epidemic containment based on
mean-field approximations can result in inefficient control policies.

The aim of this paper is to present a tighter lower-bound on the decay rate of
the stochastic SIS process based on a second-order moment closure. Specifically,
we show that the decay rate is bounded from below by the maximum real eigenvalue
of a Metzler matrix whose size grows quadratically with respect to the number of
nodes in the network. In order to derive our lower-bound, we describe the
stochastic dynamics of the SIS process using a system of stochastic differential
equations with Poisson jumps. This approach allows us to conveniently evaluate
the dynamics of the first and the \emph{second}-order moments of random
variables relevant for the spreading processes. Furthermore, we prove
theoretically and illustrate numerically that our lower-bound strictly improves
the one based on first-order approximations.

We \label{ref:contributions} remark that, although improved decay rates for the
discrete-time SIS model were presented using second-order analysis in
\cite{Ruhi2016a}, their bounds are applicable only to the special case where the
transmission and recovery rates of nodes are homogeneous and, furthermore,
satisfy restrictive algebraic conditions in terms of nonnegativity of infinitely
many matrices. Likewise, the second-order analysis of the continuous-time SIS
model by the authors in \cite{Cator2012} uses mean-field approximations and,
hence, it is not clear how the analysis relates to the dynamics of the original
stochastic SIS process. {Moreover, their analysis is valid only
when a dominant eigenvalue of a certain matrix (i.e., an eigenvalue having the
maximum real part) is real. In contrast with these limitations of the results in
the literature, our framework applies to the heterogeneous SIS model without any
restrictions, and is supported by rigorous proofs instead of approximations.}

This paper is organized as follows. In Section~\ref{sec:problemStatment}, we
state the problem studied in this paper. In Section~\ref{sec:main}, we present
our lower-bound on the decay rate, and show that this bound strictly improves
the one based on first-order approximations. The effectiveness of our
lower-bound is numerically illustrated in Section~\ref{sec:simulations}.

\subsection{Mathematical preliminaries}

We denote the identity and the zero matrices by $I$ and~$O$, respectively. For a
vector $u$, we denote by $u_{\backslash\{i\}}$ the vector that is obtained after
removing the $i$th element from~$u$. Likewise, for a matrix~$A$, we let $A_{i,
\backslash\{j\}}$ denote the row vector that is obtained after removing the
$j$th element from the $i$th row of~$A$. We say that a square matrix~$A$ is
irreducible if no similarity transformation by a permutation matrix transforms
$A$ into a block upper-triangular matrix. The block-diagonal matrix containing
matrices $A_1$, $\dotsc$,~$A_n$ as its diagonal blocks is denoted
by~$\bigoplus_{i=1}^n A_i$. If the matrices~$A_1$, $\dotsc$, $A_n$ have the same
number of columns, then the matrix obtained by stacking $A_1$, $\dotsc$, $A_n$
in vertical is denoted by $\col_{1\leq i\leq n} A_i$.

A directed graph is defined as the pair $\mathcal G = (\mathcal V, \mathcal E)$,
where $\mathcal V$ is a finite ordered set of nodes and $\mathcal E \subset
\mathcal V \times \mathcal V$ is a set of directed edges. By convention, if $(v,
v') \in \mathcal E$, we understand that there is an edge from $v$ pointing
towards~$v'$, in which case $v$ is said to be an in-neighbor of~$v'$. A directed
path from $v$ to $v'$ in~$\mathcal G$ is an ordered set of nodes $(v_0, \dotsc,
v_{\ell})$ such that $v_{0} = v$, $v_{\ell} = v'$, and $(v_{{k}}, v_{k+1}) \in
\mathcal E$ for $k = 0, \dotsc, \ell-1$. We say that $\mathcal G$ is strongly
connected if there exists a directed path from~$v$ to~$v'$ for all $v, v'\in
\mathcal V$. The adjacency matrix of~$\mathcal{G}$ is defined as the square
matrix, having the same dimension as the number of the nodes, such that its $(i,
j)$th entry equals $1$ if the $j$th node is an in-neighbor of the $i$th node,
and equals $0$ otherwise. It is well known that a directed graph is strongly
connected if and only if its adjacency matrix is irreducible.

A real matrix~$A$ (or a vector as its special case) is said to be nonnegative,
denoted by $A\geq 0$, if all the entries of~$A$ are nonnegative. Likewise, if
all the entries of~$A$ are positive, then $A$ is said to be positive. For
another matrix~$B$ having the same dimensions as $A$, the notation $A\leq B$
implies $B-A\geq 0$. If $A\leq B$ and $A\neq B$, we write $A\lneq B$. For a
square matrix~$A$, we say that $A$ is Metzler~\cite{Farina2000} if the
off-diagonal entries of~$A$ are nonnegative. It is easy to see that $e^{At} \geq
0$ if $A$ is Metzler and $t\geq 0$ (see, e.g., \cite{Farina2000}). For a Metzler
matrix~$A$, the maximum real part of the eigenvalues of~$A$ is denoted
by~$\lambda_{\max}(A)$. In this paper, we use the following basic properties of
Metzler matrices:

\begin{lemma}\label{lem:PF}
The following statements hold for a Metzler matrix $A$:
\begin{enumerate}
\item \label{lem:PF:PF} $\lambda_{\max}(A)$ is an eigenvalue of $A$. Moreover,
if $A$ is irreducible, then there exists a positive eigenvector corresponding to
the eigenvalue $\lambda_{\max}(A)$.

\item \label{lem:PF:monotone} If $A \leq B$, then
$\lambda_{\max}(A) \leq \lambda_{\max}(B)$. Furthermore, if $A$ is irreducible
and $A\neq B$, then $\lambda_{\max}(A) < \lambda_{\max}(B)$.

\item \label{lem:PF:monotone:cor} Assume that $A$ is irreducible. If there exist
a positive vector $u$ and a positive constant $\rho$ such that $Au \lneq \rho
u$, then $\lambda_{\max}(A) < \rho$.
\end{enumerate}
\end{lemma}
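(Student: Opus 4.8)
The plan is to reduce all three statements to the classical Perron--Frobenius theory for nonnegative matrices by a diagonal shift. Since a Metzler matrix $A$ can have negative entries only on its diagonal, there is $s>0$ such that $N := A + sI \geq 0$; moreover $\lambda_{\max}(N) = \lambda_{\max}(A) + s$, the matrices $N$ and $A$ have the same eigenvectors, and $N$ is irreducible if and only if $A$ is, because irreducibility is determined solely by the zero pattern of the off-diagonal entries, which $A$ and $N$ share. Hence it suffices to prove the analogous statements for a nonnegative matrix $N$ with $\lambda_{\max}$ replaced by the spectral radius $\rho(N)$; the one point worth recording is that $\rho(N) = \lambda_{\max}(N)$ for $N\geq 0$, since every eigenvalue $\mu$ obeys $\mathrm{Re}\,\mu \le |\mu| \le \rho(N)$ while Perron--Frobenius guarantees that $\rho(N)$ is itself an eigenvalue.

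Part~\ref{lem:PF:PF} is then immediate: the Perron--Frobenius theorem applied to $N = A + sI$ gives that $\rho(N)$ is an eigenvalue with a nonnegative eigenvector, which can be taken strictly positive when $N$ --- equivalently $A$ --- is irreducible; undoing the shift yields the claim. For part~\ref{lem:PF:monotone:cor} I would rewrite the hypothesis as $Nu \lneq (\rho+s)u$ with $N = A + sI$ irreducible and nonnegative, and apply part~\ref{lem:PF:PF} to $N^{\top}$ (also irreducible and nonnegative, with the same spectral radius) to obtain a strictly positive left eigenvector $v$, so that $v^{\top}N = \rho(N)\,v^{\top}$. Pairing with $u$ gives $\rho(N)\,v^{\top}u = v^{\top}Nu \le (\rho+s)\,v^{\top}u$, and the inequality is strict because the vector $(\rho+s)u - Nu$ is nonnegative and nonzero while $v$ is strictly positive. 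Dividing by $v^{\top}u>0$ yields $\rho(N) < \rho+s$, that is, $\lambda_{\max}(A) < \rho$. (This Collatz--Wielandt-type argument does not in fact use $\rho>0$, a remark I will reuse below.)

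For part~\ref{lem:PF:monotone}, the weak inequality follows from $0 \le A+sI \le B+sI$ together with monotonicity of the spectral radius of nonnegative matrices, itself a consequence of Gelfand's formula and the entrywise bound $(A+sI)^{k} \le (B+sI)^{k}$. For the strict inequality, note first that $B$ inherits irreducibility from $A$, since the off-diagonal nonzero pattern of $B$ contains that of $A$. Let $u>0$ be a Perron eigenvector of $B$ furnished by part~\ref{lem:PF:PF}, so $Bu = \lambda_{\max}(B)\,u$; since $B-A$ is nonnegative and nonzero and $u$ is strictly positive, $(B-A)u$ is nonnegative and nonzero, whence $Au = Bu - (B-A)u \lneq \lambda_{\max}(B)\,u$. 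Applying part~\ref{lem:PF:monotone:cor} to the irreducible matrix $A$ with the vector $u$ and the constant $\lambda_{\max}(B)$ then gives $\lambda_{\max}(A) < \lambda_{\max}(B)$.

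I expect the main obstacle to be the careful handling of the strict inequalities rather than any deep argument: one must verify that irreducibility survives the shift $A \mapsto A+sI$ and the comparison $A \le B$, and must track precisely that a $\lneq$ vector inequality, when paired against a strictly positive Perron (left-)eigenvector, is exactly what upgrades a $\le$ to a $<$. The analytic ingredients --- the Perron--Frobenius theorem and Gelfand's spectral-radius formula --- are standard and can simply be cited.
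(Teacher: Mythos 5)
Your proposal is correct, but it organizes the argument differently from the paper. The paper treats parts 1 and 2 as citations (Perron--Frobenius for Metzler matrices in Farina--Rinaldi, and monotonicity of the dominant eigenvalue of nonnegative matrices in Horn--Johnson), and its only substantive argument is for part 3, which it derives \emph{from} part 2: setting $\epsilon = \rho u - Au$, it perturbs $A$ to $A' = A + \bigoplus(\epsilon_1/u_1,\dotsc,\epsilon_n/u_n)$, so that $A'u=\rho u$ with $u>0$ forces $\lambda_{\max}(A')=\rho$, and then $A\lneq A'$ with $A$ irreducible gives $\lambda_{\max}(A)<\rho$ by the strict monotonicity in part 2. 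You reverse this dependency: you prove part 3 directly by pairing the hypothesis $Au\lneq\rho u$ against a positive left Perron eigenvector of the shifted matrix (a Collatz--Wielandt-type duality argument), and then deduce the strict inequality in part 2 from part 3 using the Perron eigenvector of $B$; the weak inequality you get from Gelfand's formula. Both routes are sound, and your bookkeeping is careful at exactly the delicate points: $B$ inherits the Metzler property and irreducibility from $A\leq B$, $(B-A)u$ is nonzero because $u>0$, and --- crucially --- your version of part 3 never uses $\rho>0$, which is needed since you invoke it with $\rho=\lambda_{\max}(B)$, a quantity of arbitrary sign. The paper's perturbation trick buys a very short proof given the strict monotonicity statement it cites; your approach is more self-contained (only the existence half of Perron--Frobenius plus Gelfand is cited) and makes the mechanism behind the strict inequalities explicit, at the cost of a slightly longer chain of reductions through the shift $A\mapsto A+sI$, which the paper handles implicitly by quoting Metzler-specific results.
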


\begin{proof}
The first claim is part of the Perron-Frobenius theorem for Metzler matrices
(see, e.g., \cite[Theorems~11 and~17]{Farina2000}). The second claim follows
from the Perron-Frobenius theory and the monotonicity of the maximum real
eigenvalue of nonnegative matrices~\cite[Section~8.4]{Horn1990}. To prove the
last statement, let $\epsilon = \rho u - Au$ and define $A' = A +
\bigoplus(\epsilon_1/u_1, \dotsc, \epsilon_n/u_n)$, where $n$ is the length of
the vector $u$. Since $A'u = A u + \epsilon = \rho u$, $A'$ is irreducible, and
$v$ is positive, it follows that $\lambda_{\max}(A') = \rho$ from the
Perron-Frobenius theorem for irreducible Metzler
matrices~\cite[Theorem~17]{Farina2000}. Since $A$ is irreducible and $A\lneq
A'$, the second statement of the lemma shows that $\lambda_{\max}(A) <
\lambda_{\max}(A')= \rho$.
\end{proof}

\section{Problem Statement}\label{sec:problemStatment}

We start by giving a brief overview of the SIS model~\cite{Nowzari2015a}. Let
$\mathcal G = (\mathcal V, \mathcal E)$ be a strongly connected directed graph
with nodes $v_1$, $\dotsc$, $v_n$. In the SIS model, at a given (continuous)
time~$t \geq 0$, each node can be in one of two possible states: {\it
susceptible} or {\it infected}. When a node~$v_i$ is infected, it can randomly
transition to the susceptible state with an instantaneous rate~$\delta_i > 0$,
called the {\it recovery rate} of node~$v_i$. On the other hand, if an
in-neighbor of node~$v_i$ is in the infected state, then the in-neighbor can
infect node~$v_i$ with an instantaneous rate~$\beta_i$, where $\beta_i > 0$ is
called the {\it infection rate} of node $v_i$. It is easy to see that the SIS
model is a continuous-time Markov process and has a unique absorbing state at
which all the nodes are susceptible. Since this absorbing state is reachable
from any other state, the SIS model reaches this infection-free absorbing state
in a finite time with probability one. The aim of this paper is to study the
stability of this infection-free absorbing state, defined as follows:

\begin{definition}\label{defn:}
Let $\epsilon>0$ and define the probability 
$$
p_i(t) = \Pr(\text{$v_i$ is infected at time $t$}).
$$
We say that the SIS model is
$\epsilon$\nobreakdash-\emph{exponentially mean stable} if there exists a
constant $C>0$ such that, for all nodes $v_i$ and $t\geq 0$, we have $p_i(t)
\leq C e^{-\epsilon t}$ for any set of initially infected nodes at time $t=0$.
Then, we define the \emph{decay rate} of the SIS model as $\rho = \sup\{
\epsilon \colon \text{SIS model is $\epsilon$-exponentially stable} \}$.
\end{definition}

{The notion of the decay rate was studied in, e.g.,
\cite{VanMieghem2009a} and \cite{Chakrabarti2008} for the cases of continuous-
and discrete-time problem settings, respectively, and is closely related to
other important quantities on spreading processes such as epidemic
thresholds~\cite{VanMieghem2009a} and
mean-time-to-absorption~\cite{Ganesh2005}.} Specifically, a basic argument from
the theory of Markov processes shows that the SIS model is
$\epsilon$\nobreakdash-exponentially mean stable for a sufficiently small
$\epsilon>0$ (with a possibly large $C$) and, therefore, it always has a
positive decay rate. However, exact computation of the decay rate is hard in
practice. Even in the homogeneous case, where all nodes share the same infection
and recovery rates, the decay rate equals the modulus of the largest real-part
of the non-zero eigenvalues of a $2^n\times 2^n$ matrix representing the
infinitesimal generator of the SIS model~\cite{VanMieghem2009a}. An alternative
approach for analyzing the decay rate is via upper bounds on the dynamics of the
SIS model based on first-order mean-field approximations. An example of such a
first-order upper bound is described below. Let us define the vector $p(t) =
\col_{1\leq i\leq n}p_i(t)$ containing the infection probabilities of the nodes.
Also, let $A$ be the adjacency matrix of $\mathcal G$ and define the diagonal
matrices~$B = \bigoplus(\beta_1, \dotsc, \beta_n)$ and~$D = \bigoplus(\delta_1,
\dotsc, \delta_n)$. Then, we can show \cite{Preciado2014} the inequality~$p(t)
\leq e^{(BA-D)t}p(0)$, which gives the following lower-bound on the decay rate:
\begin{equation}\label{eq:rho1}
\rho \geq \rho_1 = -\lambda_{\max}(BA-D). 
\end{equation}
Although this lower-bound is computationally efficient to find, there are
several cases in which we can observe a large gap between this bound and the
exact decay rate, as illustrated in the following example:

\begin{example}\label{ex:}
Let us consider the SIS model over a romantic and sexual network in a high
school (\emph{Jefferson}, $n=288$) taken from \cite{Bearman2004}. For
simplicity, we assume a homogeneous transmission rate $\beta_i =
0.9/\lambda_{\max}(A)$ and a (normalized) recovery rate $\delta_i = 1$ for all
nodes. In order to approximately compute the true decay rate~$\rho$, we generate
\num[group-separator={,}]{10000} sample paths of the SIS model over the time
interval $[0, 100]$ starting from the initial state at which all nodes are
infected. From this computation, we obtain a decay rate of $\rho \approx 0.454$.
On the other hand, the first-order approximation in \eqref{eq:rho1} equals
$\rho_1 = 0.1$, whose relative error from $\rho$ is around 78\%.
\end{example}

\begin{remark}\label{rmk:positiveGap}
{We can in fact show that the strict inequality $\rho > \rho_1$
holds in~\eqref{eq:rho1}. Let $v_i$ be a node having the minimum recovery rate
$\delta_{\min} = \min_{1\leq i\leq n}\delta_i > 0$ among all nodes, and consider
the situation where only the node $v_i$ is initially infected. Since $p_i(t)\geq
e^{-\delta_i t}$ for every $t\geq 0$, we have
\begin{equation}\label{eq:strict1}
\rho \geq \delta_i = \delta_{\min}.
\end{equation}
Then, let us take an arbitrary positive constant $b < \min_{1\leq i\leq
n}\beta_i$. Since $\delta_{\min} = -\lambda_{\max}(-D)$ and $-D \leq b A -D$,
Lemma~\ref{lem:PF}.\ref{lem:PF:monotone} shows
\begin{equation}\label{eq:strict2}
-\delta_{\min} \leq
\lambda_{\max}(b A - D). 
\end{equation}
On the other hand, Since $b A - D$ is irreducible (by
the strong connectivity of $\mathcal G$) and $bA-D \lneq BA-D$, we also have
\begin{equation}\label{eq:strict3}
\lambda_{\max}(bA - D) < \lambda_{\max}(BA-D) = -\rho_1
\end{equation}
by Lemma~\ref{lem:PF}.\ref{lem:PF:monotone}. Inequalities
\eqref{eq:strict1}--\eqref{eq:strict3} prove the strict inequality~$\rho >
\rho_1$, as desired. }
\end{remark}

\section{Main Result} \label{sec:main}

As we have demonstrated in Example~\ref{ex:}, there may be a large gap between
the true decay rate $\rho$ and its first-order approximation~$\rho_1$ for the
SIS model. The aim of this paper is to fill in this gap by providing a better
lower-bound on the decay rate. Specifically, the following theorem presents an
improved lower-bound on the decay rate and is the main result of this paper:

\begin{theorem} 
Define the $n^2\times n^2$ Metzler matrix 
\begin{equation*}
\mathcal A \!=\! \begin{bmatrix}
-D & \bigoplus_{i=1}^n (\beta_i A_{i, \backslash\{i\}})
\\
\col_{1\leq i\leq n}(\delta_i V_i) & 
\ \ \bigoplus_{i=1}^n\left(
- \Gamma_i + \col_{j\neq i} \beta_j A_{j, \backslash\{i\}}
\right)
\end{bmatrix}, 
\end{equation*}
where $V_i\in \mathbb{R}^{(n-1)\times n}$ is the matrix satisfying $V_i x =
x_{\backslash\{i\}}$ for all $x\in\mathbb{R}^n$, and $\Gamma_i  =
\bigoplus_{j\neq i}\gamma_{ij}$ for $\gamma_{ij} = \delta_i + \delta_j +
a_{ij}\beta_i$. Define $\rho_2 = -\lambda_{\max}(\mathcal A)$. Then, we have
\begin{equation*}
\rho \geq \rho_2 > \rho_1.
\end{equation*}
\end{theorem}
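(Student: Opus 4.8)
The plan is to prove $\rho\ge\rho_2$ and $\rho_2>\rho_1$ separately: the first by a second-order moment closure followed by a comparison argument, the second by the Perron--Frobenius tools of Lemma~\ref{lem:PF}.

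For $\rho\ge\rho_2$, I would represent the SIS process by the $\{0,1\}$-valued indicators $x_i(t)$ of the event that $v_i$ is infected at time $t$, governed by stochastic differential equations with independent Poisson jumps (a recovery jump of $x_i$ at rate $\delta_i$, an infection jump at rate $\beta_i\sum_j a_{ij}x_j$), so that $p_i(t)=\mathbb E[x_i(t)]$. Applying the infinitesimal generator to $x_i$ gives the \emph{exact} identity $\dot p_i=-\delta_i p_i+\beta_i\sum_{j\neq i}a_{ij}q_{ij}$, where $q_{ij}:=\mathbb E[(1-x_i)x_j]$ is the probability that $v_j$ is infected while $v_i$ is not; after grouping the $q_{ij}$, $j\neq i$, into a vector $q_i$ and then the $q_i$ into $q$, this is exactly the first block row of $\mathcal A$ applied to the stacked vector $z:=(p,q)$. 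Then I would apply the generator to $(1-x_i)x_j$: besides the terms $\delta_i p_j$ and $-\gamma_{ij}q_{ij}$, this produces third-order terms $\mathbb E[(1-x_i)(1-x_j)x_k]$ and $-\mathbb E[(1-x_i)x_j x_k]$ with $k\neq i,j$. This is where the closure enters: since $0\le x_k\le 1$ almost surely, $\mathbb E[(1-x_i)(1-x_j)x_k]\le\mathbb E[(1-x_i)x_k]=q_{ik}$, and the nonpositive third-order term may be dropped, giving the componentwise differential inequality $\dot q_{ij}\le\delta_i p_j-\gamma_{ij}q_{ij}+\beta_j\sum_{k\neq i,j}a_{jk}q_{ik}$ --- exactly the second block row of $\mathcal A$. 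Hence $\dot z\le\mathcal A z$ entrywise; since $\mathcal A$ is Metzler, the comparison principle for cooperative systems (which follows from $e^{\mathcal A t}\ge 0$) gives $0\le z(t)\le e^{\mathcal A t}z(0)$. Because $p$ is a subvector of $z$ and $\norm{z(0)}\le n$ uniformly over initial conditions, for every $\epsilon<-\lambda_{\max}(\mathcal A)=\rho_2$ the matrix $\mathcal A+\epsilon I$ is Hurwitz, so $\norm{e^{\mathcal A t}}\le M_\epsilon e^{-\epsilon t}$ and thus $p_i(t)\le M_\epsilon n\,e^{-\epsilon t}$ for all $i$ and $t$; the SIS model is therefore $\epsilon$-exponentially mean stable for every such $\epsilon$, and so $\rho\ge\rho_2$.

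For $\rho_2>\rho_1$, equivalently $\lambda_{\max}(\mathcal A)<\lambda_{\max}(BA-D)=:\lambda_1$, I would use Lemma~\ref{lem:PF}.\ref{lem:PF:monotone:cor}. First I would check that $\mathcal A$ is irreducible: its dependency digraph has arcs $q_{ik}\to p_i$ whenever $a_{ik}=1$, $p_j\to q_{ij}$ for every $j\neq i$ (since $\delta_i>0$), and $q_{ik}\to q_{ij}$ whenever $a_{jk}=1$; following a shortest $\mathcal G$-path from $j$ to $i$ (whose interior avoids $i$) one reaches $p_i$ from any $q_{ij}$, from any $p_i$ one reaches every $p_{i'}$, and hence --- through the $p$-variables --- every $q_{i'j}$, so the digraph is strongly connected. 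Next, let $w>0$ be a Perron eigenvector of the irreducible Metzler matrix $BA-D$, so $(BA-D)w=\lambda_1 w$, and let $u>0$ be the vector whose $p$-block equals $w$ and whose $q_i$-block equals $w_{\backslash\{i\}}$ for every $i$. A short computation using $\gamma_{ij}=\delta_i+\delta_j+\beta_i a_{ij}$ shows that $\mathcal A u$ equals $\lambda_1 w_i$ on the $p_i$-entry and equals $\lambda_1 w_j-(\beta_i a_{ij}w_j+\beta_j a_{ji}w_i)$ on the $(i,j)$-entry of the $q$-block; the correction terms are nonnegative and, because $\mathcal G$ has an edge, strictly positive for at least one pair $(i,j)$, so $\mathcal A u\lneq\lambda_1 u$. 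Applying Lemma~\ref{lem:PF}.\ref{lem:PF:monotone:cor} to $\mathcal A$ (after adding $cI$ to both sides with $c$ large enough that $\lambda_1+c>0$, which does not affect the conclusion) then gives $\lambda_{\max}(\mathcal A)<\lambda_1$, that is, $\rho_2>\rho_1$.

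The hard part will be the generator computation for $\frac{d}{dt}\mathbb E[(1-x_i)x_j]$ in the first step: one has to track carefully which neighbour sums get truncated to $k\neq i,j$, use $x_k^2=x_k$ to collapse the correct products, and check the signs of the surviving third-order terms, so that after the closure inequality the dynamics land exactly on the block $-\Gamma_i+\col_{j\neq i}\beta_j A_{j,\backslash\{i\}}$ and on the off-diagonal blocks of $\mathcal A$. Everything else --- the comparison principle, the spectral decay estimate, and the Perron--Frobenius argument --- should be routine given Lemma~\ref{lem:PF}.
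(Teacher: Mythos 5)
Your proof of $\rho\ge\rho_2$ is essentially the paper's own argument: the same Poisson-jump SDE representation, the same generator computation for $(1-x_i)x_j$ with the nonnegative third-order terms discarded (the paper keeps the exact equation $dr/dt=\mathcal A r-\col(0,\epsilon)$ with $\epsilon\ge0$ and uses variation of constants, which is the same comparison step), and your passage from $\epsilon$-stability for every $\epsilon<\rho_2$ to $\rho\ge\rho_2$ is, if anything, slightly more careful about possible Jordan blocks. Where you genuinely diverge is the proof of $\rho_2>\rho_1$. The paper treats the case $\rho_2\ge\delta_{\min}$ separately, and otherwise splits $\mathcal A=\mathcal M-\mathcal N$, passes to the reduced Metzler matrix $\mathcal L=(\mathcal N_{22}-\rho_2I)^{-1}\bigl(-\mathcal N_{21}(D-\rho_2I)^{-1}\mathcal M_{12}+\mathcal M_{22}\bigr)$, shows $\lambda_{\max}(\mathcal L)\ge1$ from the eigenvector of $\mathcal A$, and then bounds $\mathcal L w$ from above with the lifted Perron vector $w=\col_{i}v_{\backslash\{i\}}$ and Lemma~\ref{lem:PF}.\ref{lem:PF:monotone:cor}, the irreducibility of $\mathcal L$ being proved in the appendix. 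You instead test $\mathcal A$ itself against the lifted vector $u=\col(w,\col_i w_{\backslash\{i\}})$, obtain $\mathcal A u\lneq\lambda_{\max}(BA-D)\,u$ (your slack terms $\beta_i a_{ij}w_j+\beta_j a_{ji}w_i$ are exactly the paper's vector $\phi$ up to the diagonal scaling), and apply Lemma~\ref{lem:PF}.\ref{lem:PF:monotone:cor} directly to $\mathcal A$ after a diagonal shift to meet the lemma's positivity hypothesis; I checked your eigenvector computation and your strong-connectivity argument for the dependency digraph of $\mathcal A$ (which parallels, and is in fact easier than, the appendix argument for $\mathcal L$, since the $p$-variables act as hubs), and both are correct. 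Your route buys a shorter proof with no case split on $\rho_2$ versus $\delta_{\min}$, no matrix inversion or splitting, and no auxiliary matrix $\mathcal L$; the paper's route, at the cost of that machinery, yields the additional quantitative sandwich $1\le\lambda_{\max}(\mathcal L)<\max_i(\delta_i-\rho_1)/(\delta_i-\rho_2)$ on the reduced matrix. Both arguments implicitly assume $n\ge2$ and no self-loops (needed so that $A_{i,\backslash\{i\}}w_{\backslash\{i\}}=(Aw)_i$), so this is not a gap relative to the paper.
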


In order to prove this theorem, we use a representation of the SIS model as a
system of stochastic differential equations with Poisson jumps (see, e.g.,
\cite{Ogura2015i}). For this purpose, we define the variable $x_i(t)$ as
$x_i(t)=1$ if node~$v_i$ is infected at time~$t$, and $x_i(t)=0$ otherwise.
Then, we can see that these variables obey the following stochastic differential
equations with Poisson jumps:
\begin{equation}\label{eq:SDE:firstOrder}
dx_i = -x_i dN_{\delta_i} + \sum_{k=1}^n a_{ik} (1-x_i)x_k dN_{\beta_i},
\end{equation}
where $a_{ik}$ is the $(i, k)$th entry of the adjacency matrix and
$N_{\delta_i}$ and $N_{\beta_i}$ denote stochastically independent Poisson
counters \cite[Chapter~1]{Hanson2007} with rates~$\delta_i$ and $\beta_i$,
respectively. The rest of this section is devoted to the proof of the theorem.
We divide the proof into two parts, namely, the proof of $\rho\geq \rho_2$
(Subsection~\ref{subsec:1}) and $\rho_2>\rho_1$ (Subsection~\ref{subsec:2}).

\subsection{Proof of $\rho\geq \rho_2$}\label{subsec:1}

From the system~\eqref{eq:SDE:firstOrder} of stochastic differential equations,
we can easily show \cite{Ogura2015i} that the expectation $p_i = E[x_i]$ obeys
the differential equation
\begin{align}
\frac{dp_i}{dt} 
&= -\delta_i E[x_i] + \sum_{k=1}^n a_{ik} E[(1-x_i)x_k]\beta_i
\notag 
\\
&
= -\delta_i p_i + \beta_i\sum_{k\neq i} a_{ik} q_{ik}, \label{eq:dot p_i}
\end{align}
where $q_{ij}(t) = E[(1-x_i(t))x_j(t)]$ is a second-order variable representing
the probability that $v_i$ is susceptible and $v_j$ is infected at time~$t$.
Since $q_{ii} \equiv 0$ by definition, in the sequel we do not consider the
variable of the form $q_{ii}$. We next derive differential equations to
characterize the second-order variables~$q_{ij}$. Applying Ito's formula (see,
e.g., \cite[Chapter~4]{Hanson2007}) for stochastic differential equations with
Poisson jumps to the variable $(1-x_i)x_j$, we can show
\begin{equation}\label{eq:SDE:secondOrder}
\begin{aligned}
d((1-x_i)x_j) 
&=
x_ix_j dN_{\delta_j} - (1-x_i)x_j \sum_{k=1}^n a_{ik}x_k dN_{\beta_i}
\\
&\quad - (1-x_i)x_j dN_{\delta_j} + (1-x_i)(1-x_j)\sum_{k=1}^n a_{jk}x_k dN_{\beta_j}
\end{aligned}
\end{equation}
for any distinct pair $(v_i, v_j)$ of nodes. To proceed, we define the
probabilities~$p_{ij}(t) = E[x_i(t) x_j(t)]$ and~$p_{ijk}(t) = E[x_i(t) x_j(t)
x_k(t)]$ for nodes $v_i$, $v_j$, and $v_k$. Then, from
\eqref{eq:SDE:secondOrder}, we can compute the derivative of $q_{ij}$ as
\begin{equation*}
\begin{aligned}
\frac{dq_{ij}}{dt} 
&= 
\delta_j E[x_ix_j] - \beta_i \sum_{k=1}^n a_{ik}E[(1-x_i)x_j x_k] 
\\
&\quad- \delta_j E[(1\!-\!x_i)x_j] + \beta_j \sum_{k=1}^n a_{jk} E[(1\!-\!x_i)(1\!-\!x_j)x_k]\\
&=
\delta_i p_{ij} - \beta_i a_{ij}(p_{j}-p_{ij}) 
- \delta_j q_{ij} + \beta_j \sum_{k=1}^n a_{jk} (p_k - p_{ik})
- \epsilon_{ij}, 
\end{aligned}
\end{equation*}
where the function $\epsilon_{ij} = \beta_i \sum_{k\neq j} a_{ik}(p_{jk} -
p_{ijk}) + \beta_j \sum_{k=1}^n a_{jk} (p_{jk}-p_{ijk})$ is nonnegative because
$p_{jk}\geq p_{ijk}$ for all nodes $v_i$, $v_j$, and $v_k$. Using the identity
$p_j - p_{ij} = q_{ij}$ and defining the variable~$\gamma_{ij} = \delta_i +
\delta_j + a_{ij}\beta_i$, we obtain
\begin{equation}\label{eq:dot q_ij}
\frac{dq_{ij}}{dt} 
=
-\gamma_{ij} q_{ij} + \delta_i p_j + \beta_j \sum_{k\neq i} a_{jk}q_{ik} - \epsilon_{ij}.
\end{equation}

In order to upper-bound the infection probabilities of the nodes, we define the
vector variables $q_i = \col_{j\neq i} q_{ij}$ and $q = \col_{1\leq i\leq n}q_i$
having dimensions $n-1$ and $n(n-1)$, respectively. Then, we can rewrite the
differential equation~\eqref{eq:dot p_i} as ${dp_i}/{dt} = -\delta_i p_i +
\beta_i A_{i, \backslash\{i\}} q_i$. Stacking this differential equation in
vertical with respect to $i$, we obtain
\begin{equation}\label{eq:dotp}
\frac{dp}{dt} 
= 
-D p + \biggl(\bigoplus_{1=1}^n\beta_iA_{i, \backslash\{i\}}\biggr)q, 
\end{equation}
where $D = \bigoplus(\delta_1, \dotsc, \delta_n)$. Also, from \eqref{eq:dot
q_ij}, it follows that ${dq_{ij}}/{dt} = -\gamma_{ij} q_{ij} + \delta_i p_j +
\beta_j A_{j, \backslash\{i\}}q_i - \epsilon_{ij}$. Stacking this differential
equation with respect to $j \in \{1, \dotsc, n\} \backslash \{i\}$, we see that 
$$
\frac{dq_i}{dt} = - \Gamma_i q_i  + \delta_i V_i p + \left( \col_{j\neq i} (\beta_j
A_{j, \backslash\{i\}})\right)q_i - \col_{j\neq i}\epsilon_{ij}.
$$ 
By stacking this differential equation with respect to $i = 1, \dotsc, n$, we
obtain the following differential equation
$$
\frac{dq}{dt} =
\left(\col_{1\leq i\leq n} \delta_i V_i \right)p + \bigoplus_{i=1}^n \left(- \Gamma_i
+\col_{j\neq i} \beta_j A_{j, \backslash\{i\}} \right) q - \epsilon,
$$ where $\epsilon=
\col_{1\leq i\leq n}\col_{j\neq i} \epsilon_{ij}$ is a vector-valued nonnegative
function. This differential equation and \eqref{eq:dotp} show that the
variable~$r = \col(p, q)$ satisfies $dr/dt = \mathcal A r-\col(0, \epsilon)$.

We are now at the position to prove the inequality~$\rho\geq \rho_2$. Since
$\mathcal A$ is Metzler and $\epsilon(t)$ is entry-wise nonnegative for every
$t\geq 0$, we can obtain the upper bound $ r(t) = e^{\mathcal A t} r(0) -
\int_{0}^t e^{\mathcal A(t-\tau)}\col(0, \epsilon(\tau))\,d\tau \leq e^{\mathcal
A t}r(0)$. This inequality clearly implies that the SIS model is
$\rho_2$\nobreakdash-exponentially mean stable. This completes the proof of the
inequality.

\subsection{Proof of $\rho_2>\rho_1$}\label{subsec:2}

If $\rho_2 \geq \delta_{\min}$, then we trivially have $\rho_2>\rho_1$ because
we know $\delta_{\min} >\rho_1$ from \eqref{eq:strict2} and \eqref{eq:strict3}.
Let us consider the case of $\rho_2 < \delta_{\min}$. Let $u$ be a non-zero
vector of~$\mathcal A$ corresponding to the eigenvalue~$-\rho_2 =
\lambda_{\max}(\mathcal A)$, i.e., assume that $\mathcal A u = -\rho_2 u$. We
split the matrix~$\mathcal A$ as $\mathcal A = \mathcal M - \mathcal N$ using
the matrices
$$
\mathcal M =
\begin{bmatrix}
O & \mathcal M_{12}
\\
O & 
\mathcal M_{22}\end{bmatrix},\ 
\mathcal N = 
\begin{bmatrix}
D & O
\\
\mathcal N_{21} & 
\mathcal N_{22}
\end{bmatrix},
$$
where 
\begin{equation*}
\begin{gathered}
\mathcal M_{12} 
= 
\bigoplus_{i=1}^n \beta_i A_{i,
\backslash\{i\}},
\ 
\mathcal M_{22} 
= 
\bigoplus_{i=1}^n \left( \col_{j\neq i}
(\beta_j A_{j, \backslash\{i\}}) - \bigoplus_{j\neq i} a_{ij}\beta_i \right)
,
\\
\mathcal N_{21} = -\col_{1\leq i\leq n}\delta_i V_i,
\ 
\mathcal N_{22} =
\bigoplus_{i=1}^n \bigoplus_{j\neq i}(\delta_i + \delta_j).
\end{gathered}
\end{equation*}
Then, we have $\mathcal M u = (\mathcal N -\rho_2 I)u$ and, hence, $(\mathcal N
-\rho_2 I)^{-1} \mathcal M u = u$, where the inversion of $\mathcal N -\rho_2 I$
is allowed by our assumption $\rho_2 < \delta_{\min}$. Therefore, the matrix
$(\mathcal N -\rho_2 I)^{-1} \mathcal M$ has an eigenvalue equal to~$1$. Since
this matrix has the following upper-triangular structure
\begin{equation*}
(\mathcal N -\rho_2 I)^{-1} \mathcal M = 
\begin{bmatrix}
O & * \\ O & \mathcal L
\end{bmatrix}
\end{equation*}
for the Metzler matrix $\mathcal L$ defined by 
\begin{equation*}
\mathcal L  = (\mathcal N_{22} -\rho_2 I)^{-1} (
-\mathcal N_{21}(D -\rho_2 I)^{-1} \mathcal M_{12} + \mathcal M_{22} ), 
\end{equation*}
it
follows that $\mathcal L$ has an eigenvalue equal to~$1$. This specifically
implies that 
\begin{equation}\label{eq:lmL>=1}
\lambda_{\max}(\mathcal L) \geq 1. 
\end{equation}

On the other hand, we can obtain an upper bound on~$\lambda_{\max}(\mathcal L)$
as follows. The irreducible matrix $BA-D$ has a positive eigenvector $v$
corresponding to the eigenvalue~$-\rho_1$ by Lemma~\ref{lem:PF}.\ref{lem:PF:PF}.
Define the positive vector~$w = \col_{1\leq i\leq n}v_{\backslash\{i\}}$. Then,
it is easy to see that $\mathcal M_{12} w = BAv$ and $\mathcal M_{22}w =
\col_{1\leq i\leq n}(BAv)_{\backslash\{i\}} - \col_{1\leq i\leq n} \col_{j\neq
i}\beta_j a_{ij}v_i$. Using these equalities and the eigenvalue equation
$(BA-D)v = -\rho_1 v$, we can show that 
\begin{align}
\mathcal L w
&=
(\mathcal N_{22} -\rho_2 I)^{-1}\col_{i} \left(
\delta_i V_i (D -\rho_2 I)^{-1}(D-\rho_1 I) v
\right) \notag
\\
&\quad +(\mathcal N_{22} -\rho_2 I)^{-1}
\col_{i} ((D-\rho_1 I) v)_{\backslash\{i\}} - \phi \notag 
\\
&=
\col_{i}\col_{j\neq i}\left(\frac{\delta_j -\rho_1}{\delta_j -\rho_2}v_j\right) - \phi\notag
\\
&\leq
\left(\max_{1\leq i\leq n} \frac{\delta_i-\rho_1}{\delta_i-\rho_2} \right)w -\phi
\label{eq:thisequation}
\end{align}
for the nonzero and nonnegative vector 
\begin{equation*}
\phi = \col_{1\leq i\leq n}\col_{j\neq
i} \frac{a_{ij}(\beta_i v_j +\beta_j v_i)}{\delta_i+\delta_j-\rho_2}.
\end{equation*}
Since $\mathcal L$ is irreducible (for the proof, see Appendix~A), the
inequality \eqref{eq:thisequation} and
Lemma~\ref{lem:PF}.\ref{lem:PF:monotone:cor} show $\lambda_{\max}(\mathcal L) <
\max_{1\leq i\leq n} (\delta_i-\rho_1)/(\delta_i-\rho_2)$. Since we have already
shown \eqref{eq:lmL>=1}, there must exist an $i$ such that $1<
(\delta_i-\rho_1)/(\delta_i-\rho_2)$. This inequality is equivalent to
$\rho_2>\rho_1$ because both of $\delta_i-\rho_1$ and $\delta_i-\rho_2$ are
positive. This completes the proof.

\section{Numerical Simulations} \label{sec:simulations}

\begin{figure}
\vspace{-3cm}
\begin{minipage}[b]{1\linewidth}
\centering \includegraphics[height=5cm, clip, trim=2cm 0.7cm .5cm 2cm]{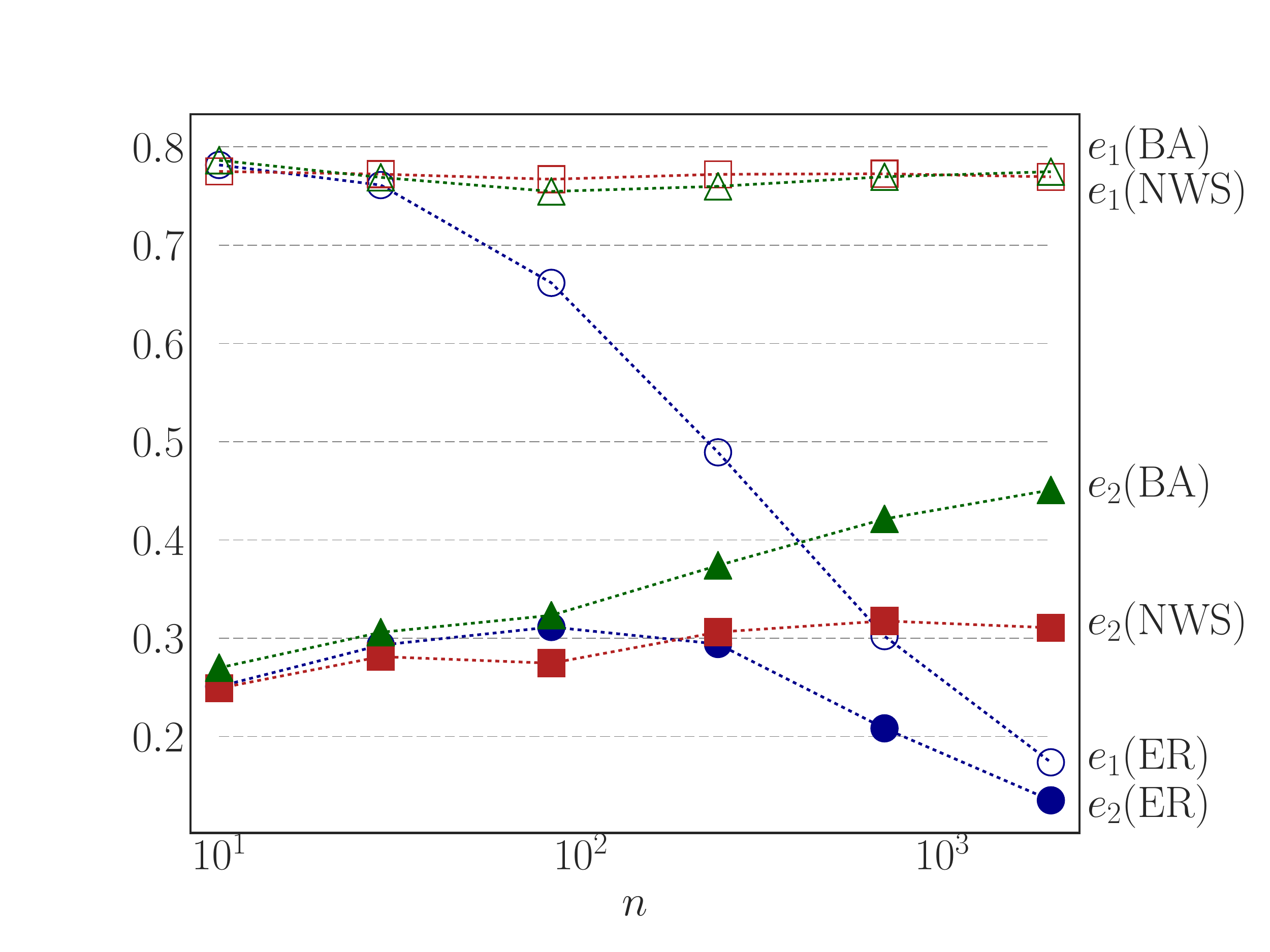}
\subcaption{\footnotesize $\beta =
0.9/\lambda_{\max}(A)$}\label{fig:large}
\vspace{3mm}
\end{minipage}
\begin{minipage}[b]{1\linewidth}
\centering \includegraphics[height=5cm, clip, trim=2cm 0.7cm .5cm 2cm]{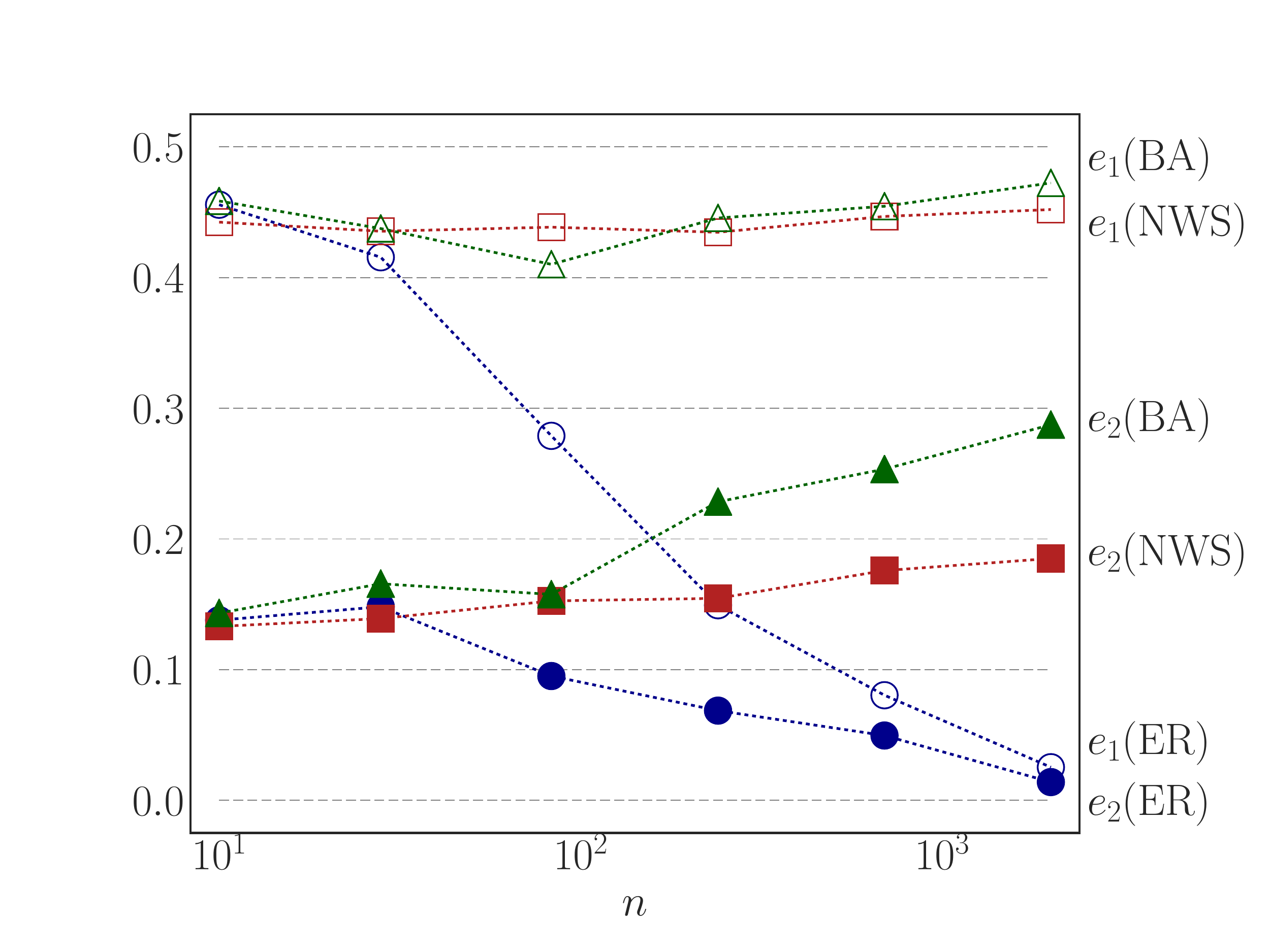}
\subcaption{$\beta =
0.7/\lambda_{\max}(A)$}\label{fig:medium}
\vspace{3mm}
\end{minipage}
\\
\begin{minipage}[b]{1\linewidth}
\centering \includegraphics[height=5cm, clip, trim=2cm 0.7cm .5cm 2cm]{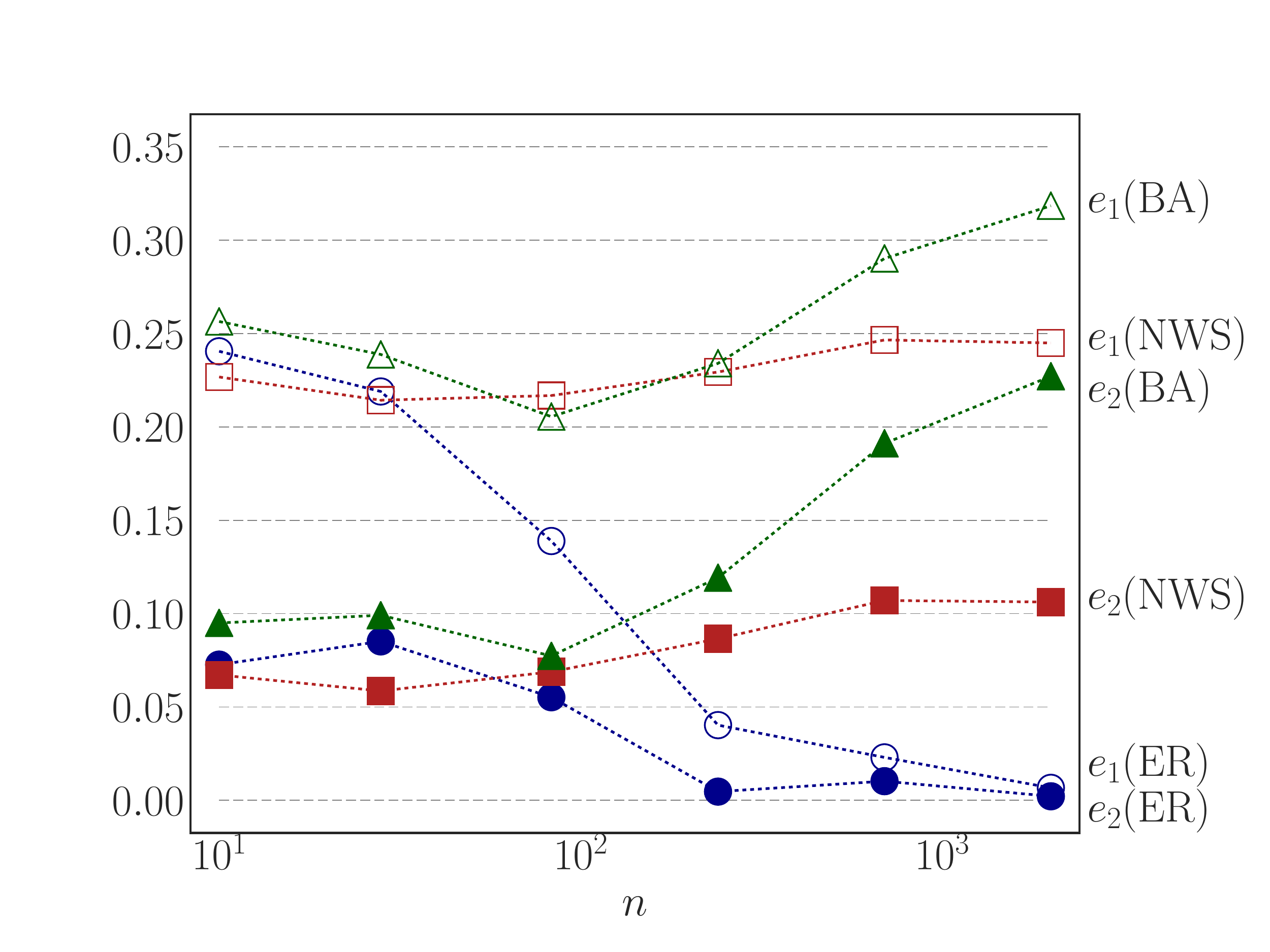}
\subcaption{$\beta =
0.5/\lambda_{\max}(A)$}\label{fig:small}
\vspace{3mm}
\end{minipage}
\caption{Relative approximation errors of the decay rate~$\rho$
for ER (circles), BA (triangles), and NWS (squares) graphs. Empty markers:
first-order bound $\rho_1$, filled markers: second-order bound $\rho_2$.}\label{fig:random}
\end{figure}

In this section, we illustrate the effectiveness of our results with numerical
simulations. The simulations are performed using Python 3.6 on a 4.2 GHz Intel
Core~i7 processor. In our simulations, we consider the SIS model over several
complex networks {with a homogeneous transmission rate~$\beta_i =
\beta$ and a recovery rate $\delta_i = \delta$ for all nodes. We normalize
$\delta = 1$ without loss of generality.} We first consider the following three
random graphs: 1)~Erd\"os\nobreakdash-R\'enyi (ER), 2)~Barab\'asi-Albert (BA),
and 3)~Newman-Watts-Strogatz (NWS) graphs. For each of the networks and various
network sizes, we compute the first-order bound~$\rho_1$, our second-order
bound~$\rho_2$, and an approximation of the true decay rate~$\rho$ (by the same
procedure used in Example~\ref{ex:}). We present the sample averages of the
relative errors $e_1 =(\rho-\rho_1)/\rho$ and $e_2 = (\rho-\rho_2)/\rho$ in
Fig.~\ref{fig:random} (20 realizations of random graphs for each data point)
{for $\beta = 0.9 /\lambda_{\max}(A)$, $0.7/ \lambda_{\max}(A)$,
and $0.5/ \lambda_{\max}(A)$}. {We can observe that our
second-order bound remarkably improves the first-order bound, specifically for
the cases of BA and NWS networks.}

\begin{figure}[tb]
\centering \includegraphics[height=5.4cm, clip, trim=2cm 0.1cm .5cm 2cm]{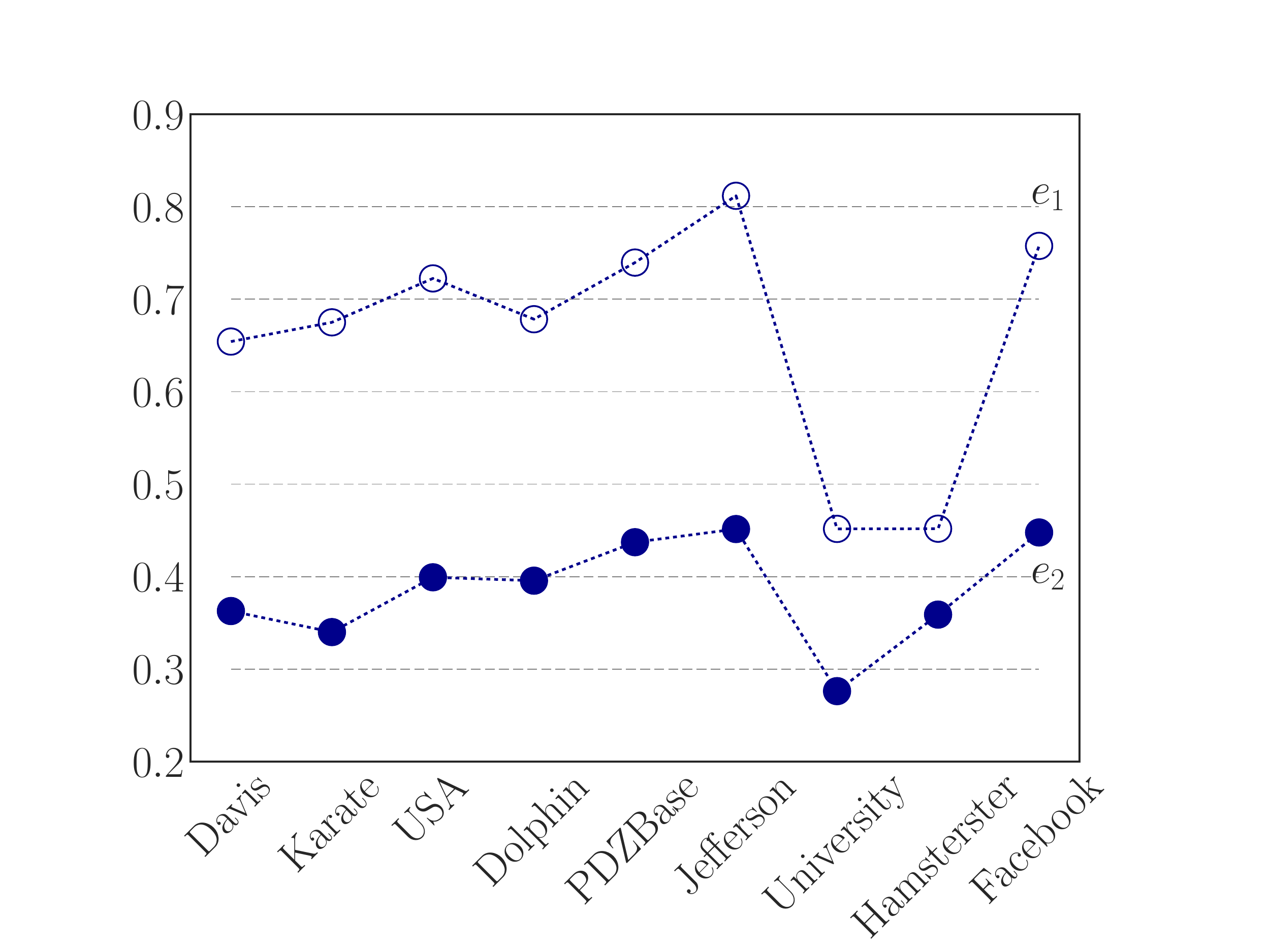}
\caption{Relative approximation errors of the decay rate~$\rho$
for real networks. Empty markers: first-order bound~$\rho_1$,
filled-markers: second-order bound~$\rho_2$.} \label{fig:deterministic}
\end{figure}

We then consider the SIS model over several real-world
networks~\cite{Kunegis2013}. Specifically, we compute lower-bounds on the decay
rates for 1)~a bipartite network from participation of women in social events
(\textit{Davis}, $n=32$), 2)~a social network of the Zachary's Karate club
(\textit{Karate}, $n=34$), 3)~the connectivity network of states in the USA
(\textit{USA}, $n = 49$), 4)~a network of bottlenose dolphins (\textit{Dolphin},
$n=62$), 5)~a network of protein-protein interactions (\textit{PDZBase},
$n=212$), 6) the high-school network described in Example~\ref{ex:}
(\emph{Jefferson}, $n=288$){, 7) an email communication network
at the University Rovira i Virgili (\emph{University}, $n = 1133$), 8) a
friendship network on hamsterster.com (\emph{Hamsterster}, $n = 1858$), and 9) a
friendship network in Facebook (\emph{Facebook}, $n = 2888$)}.
{We consider the homogeneous case as in the above simulations for
random graphs, and use the transmission rate $\beta = 0.9/\lambda_{\max}(A)$ and
the recovery rate $\delta=1$.} We show the relative errors $e_1$ and $e_2$ in
Fig.~\ref{fig:deterministic}. These simulations confirm that our second-order
lower-bound can remarkably improve the first-order bound.

\section{Conclusion}

In this paper, we have presented an improved lower-bound on the decay rate of
the SIS model over complex networks. We have specifically derived a lower-bound
on the decay rate in terms of the maximum real eigenvalue of an $n^2\times n^2$
Metzler matrix, and have shown that our lower-bound improves existing
lower-bound based on mean-field approximations of the SIS model. For deriving
our lower-bound, we have used a linear upper-bounding model for the first and
second-order moments on the SIS model. In our simulations, we have shown that
our lower-bound significantly improves on the first order lower-bound, in the
cases of both random and realistic networks. This improvement suggests that
incorporating second-order moments could allow us to drastically improve the
performance of existing strategies for spreading control
\cite{Wan2008IET,Preciado2014,Han2015a,Ogura2015a,AbadTorres2016}.

\section*{Acknowledgments}

This work was supported in part by the NSF under Grants
CAREER-ECCS-1651433 and IIS-1447470.

\appendix

\section{Irreducibility of $\mathcal L$}

Since $\mathcal N_{22}-\rho_2I$ is a diagonal matrix, it is sufficient to show
the irreducibility of~$\mathcal L' = (\mathcal N_{22}-\rho_2I)\mathcal L$. We
can show that the matrix $\mathcal L'$ can be represented as the block
matrix~$[\mathcal L'_{ij}]_{i, j}$ having the block elements
\begin{equation*}
\mathcal L'_{ij} = \begin{cases}
\col_{j\neq i} \beta_j A_{j, \backslash\{i\}} - \bigoplus_{j\neq i} a_{ij}\beta_i,&\text{if $i=j$, }
\\
\dfrac{\delta_i \beta_j}{\delta_j - \rho_2} V_j(e_j\otimes A_{j, \backslash\{j\}}), &\text{otherwise.}
\end{cases}
\end{equation*}
Therefore, the
irreducibility of~$\mathcal L'$ is equivalent to that of the block
matrix~$\mathcal L'' = [\mathcal L''_{ij}]_{i, j}$ having the block elements
\begin{equation*}
\mathcal L''_{ij} = \begin{cases}
\col_{j\neq i} A_{j, \backslash\{i\}},&\text{if $i=j$, }
\\
V_j(e_j\otimes A_{j, \backslash\{j\}}),&\text{otherwise.}
\end{cases}
\end{equation*}
To prove the irreducibility of~$\mathcal L''$, we notice that $\mathcal L''$
equals the adjacency matrix of the directed graph~$\mathcal G'' = (\mathcal V'',
\mathcal E'')$ having the $n(n-1)$ nodes 
\begin{equation*}
\{v_{1,2}, \dotsc, v_{1,n}, 
v_{2,1}, v_{2,3}, \dotsc, v_{2,n}, 
\dotsc, 
v_{n, 1}, \dotsc, v_{n,n-1}
\}
\end{equation*}
and edges
$\mathcal E'' = \mathcal E''_1 \cup \mathcal E''_2$, where $ \mathcal E''_1 = \{
(v_{i, j}, v_{j, k}) \colon \text{$(j, k) \in \mathcal E$}\} $ and $ \mathcal
E''_2 = \{ (v_{i, j}, v_{i, k}) \colon \text{$(j, k) \in \mathcal E$}\} $. Let
us show that $\mathcal G''$ is strongly connected. Take two arbitrary nodes
$v_{i_0, j_0}$ and $v_{i_1, j_1}$. Since $\mathcal G$ is strongly connected, we
can find a directed-path of the form~$(v_{i_0}, v_{k_1}, \dotsc,
v_{k_{\ell_1-1}}, v_{i_1}, v_{k_{\ell_1}})$ in $\mathcal G$. Then, from the
definition of~$\mathcal E''_1$, we see that the ordered set
$$
( v_{i_0, j_0}, v_{j_0, k_1}, v_{k_1, k_2}, \dotsc,
v_{k_{\ell-2}, k_{\ell-1}}, v_{k_{\ell-1}, i_1}, v_{i_1, k_{\ell}} )
$$ 
is a directed path in $(\mathcal V'', \mathcal E''_1)$. In order to continue
this directed path to $v_{i_1, j_1}$, we take another directed path
$(v_{k_{\ell}}, v_{k_{\ell+1}}, \dotsc, v_{k_{\ell'}}, v_{j_1})$ in $\mathcal
G$. Then, from the definition of~$\mathcal E''_2$, we can see that the ordered
set $ ( v_{i_1, k_{\ell}}, \dotsc, v_{i_1, k_{\ell'}}, v_{i_1, {j_1}} )$ is a
directed path in~$(\mathcal V'', \mathcal E''_2)$. We have thus shown the
existence of a directed path from $v_{i_0, j_0}$ to $v_{i_1, j_1}$ in $\mathcal
G''$. This shows the strong connectivity of~$\mathcal G''$ because $v_{i_0,
j_0}$ and $v_{i_1, j_1}$ were taken arbitrarily. This proves the irreducibility
of~$\mathcal L''$ and, therefore, the irreducibility of~$\mathcal L$, as
desired.


\end{document}